\newtheorem{thm}{Theorem}
\newtheorem{Def}{Definition}
\newtheorem*{thmmj}{Theorem (Morrow-Jones and Witt \cite{MorrowJones:1993zu})}
\begin{document}

\title{Designer de Sitter Spacetimes}

\author{K.\ Schleich}
\affiliation{Department of Physics and Astronomy, University of British Columbia,
Vancouver, British Columbia \ V6T 1Z1}
\author{D.M.\ Witt}
\affiliation{Department of Physics and Astronomy, University of British Columbia,
Vancouver, British Columbia \ V6T 1Z1}

\begin{abstract}
Recent observations in cosmology indicate an accelerating expansion of the universe postulated to arise from some form of dark energy, the paradigm being positive cosmological constant.
De Sitter spacetime is the well-known isotropic solution to the Einstein equations
with cosmological constant. However, as discussed here, it is not the most general, {\it locally} isotropic solution. One can construct an infinite family of such solutions, designer de Sitter spacetimes, which are everywhere locally isometric to a region of
de Sitter spacetime. However, the global dynamics of these designer cosmologies is very different
than that of de Sitter spacetime itself. The construction and dynamics of these designer de Sitter spacetimes is detailed along with some comments about their implications for the structure of our universe. 
\pacs{PACS numbers: 04.20.Cv, 04.20.-q, 04.20.Ex, 04.20.Gz, 98.80.-k}

\end{abstract}
\maketitle
\section{Introduction}
One goal of cosmology is to determine as fully as possible the physics of our universe. The standard approach toward this goal is to describe our universe in terms of a cosmological model dependent on a few simple parameters.  These parameters are then constrained by local observations, more precisely, observations of physical observables such as type I supernova \cite{Perlmutter:1998np}, WMAP data \cite{Spergel:2003cb} and the 3-dimensional galaxy power spectrum from SDSS \cite{Tegmark:2003ud}.
These parameters are inferred to have the same values everywhere through invocation of some form of the Cosmological Principle; namely that all points of the universe are in some sense equivalent on sufficiently large scales. Indeed, this assumption is built into the standard cosmological model through its use of the Robertson-Walker metric, a homogeneous, isotropic spacetime.  Global isotropy dictates that the metric has the form $ds^2 = - dt^2 + a^2(t)d\sigma_3^2$  where $d\sigma_3^2$ is one of three spatial metrics with constant sectional curvature; the round metric, the flat metric and the hyperbolic metric.  The three spatial geometries determine the spatial topology to be either $S^3$ or ${\mathbb R}^3$. The
evolution of the spatial geometry is just given by the scale factor $a(t)$. Hence, in the context of the standard cosmological model, the global dynamics of the universe is determined from the local measurements of a few parameters, in particular $\Omega$, the ratio of
the current average energy density to the critical energy density.

The application of this approach has been one of the great successes of the last decade; in the context of this cosmological model, data now constrains the parameters characterizing our universe to high precision.  In fact, it points to an accelerating expansion of the universe due to a large dark energy contribution to $\Omega$ of nearly three quarters of the critical density  with positive cosmological constant  $\Lambda$ the most likely source \cite{Spergel:2006hy}. Our universe is now described as a $\Lambda$ CDM model with a strong likelihood of an inflationary period. Thus the dynamics of spacetimes with cosmological constant are now clearly of key physical significance.

The paradigm of spacetimes with positive cosmological constant is de Sitter spacetime,  the well-known solution of $R_{ab} =\Lambda g_{ab}$. Its  metric can be written in each of the three Robertson-Walker forms:
\begin{equation}\label{sphere} ds^2 =  -dt^2 + \alpha^2 \cosh (t/\alpha )
	\left( d\psi^2 + \sin^2 \psi
	d\Omega _2 ^2  \right)\end{equation}
	with Cauchy slice $ S^3$,
\begin{equation}\label{flat} ds^2 = -dt^2 + \alpha ^2 e^{ 2 t/\alpha }
	\left( d\psi ^2 + \psi ^2
	 d\Omega _2 ^2  \right)\end{equation}
	  with partial Cauchy slice $ {\mathbb R^3}$ and
\begin{equation} \label{hyp} ds^2 = -dt^2 + \alpha ^2 \sinh ^2 \left( t/\alpha \right)
	\left( d\psi ^2 + \sinh ^2 \psi 
	 d\Omega _2 ^2 \right)\end{equation}
	 with partial Cauchy slice ${\mathbb R^3}$ where $\alpha = \sqrt{3/\Lambda}$ and  $d\Omega _2 ^2$ is the round metric on the 2-sphere. This globally isotropic spacetime, the basis for inflationary models and used in the calculation of the power spectrum from the inflationary period, hence satisfies forms of the Cosmological Principle that assume global isotropy. 
	 
	Given the importance of de Sitter spacetime to the standard cosmological model, it is natural to ask if other solutions to $R_{ab} =\Lambda g_{ab}$ exist. In particular, is global isotropy required by the dynamics in spacetimes with cosmological constant? Although Robertson-Walker cosmologies are always taken to be globally isotropic, it is interesting to note that their original motivation was only local isotropy, not global isotropy. In one of the original papers on the
Robertson-Walker cosmologies \cite{Walker1937}, Walker assumes that the every point has a local set of
coordinates such that the spacetime metric about that point can be written as an
isotropic metric. This distinction is generally neglected: the folklore in general relativity is that locally isotropic spacetimes are simply identifications on the globally isotropic ones. Thus, according to folklore,
 the global dynamics of locally isotropic cosmologies is still determined in terms of a single scale factor. 
 
 However, this folklore is false for de Sitter spacetimes;  one can construct initial data sets for locally isotropic de Sitter spacetimes whose covering space is not one of the three globally isotropic de Sitter metrics. We use the results of Morrow-Jones and Witt \cite{MorrowJones:1993zu} to demonstrate that the global dynamics of certain locally isotropic de Sitter spacetimes, designer de Sitter spacetimes, are in fact not determined by a single scale factor.  These designer de Sitter spacetimes have generic topology and consequently are abundant in the set of all locally isotropic solutions of the Einstein equations with positive cosmological constant.  We conclude by qualitatively discussing the potential signatures of these spacetimes on the angular power map spectrum of the WMAP data.
 
 \section{Designer de Sitter Spacetimes}

Historically, the existence of locally isotropic de Sitter spacetimes of generic topology was first proved by  Morrow-Jones and Witt, appearing in both the Ph.D. thesis of Morrow-Jones \cite{MorrowJones:1988yw} and in journal form in \cite{MorrowJones:1993zu}. Later, mathematicians became 
interested in the same type of spacetimes in 2+1 dimensions \cite{mess} because of connections 
between the spacetime structures and conformal structures and the work of Thurston on 3-manifolds.  
A spacetime approach to the Morrow-Jones and Witt solutions was presented in 
\cite{Bengtsson:1999ia}.
The results of  \cite{MorrowJones:1993zu} in 3+1 dimensions were then later reproduced without reference in \cite{student}.
  
The Einstein equations for a globally hyperbolic spacetime  $\Sigma \times {\mathbb R}$ with
4-metric $g_{ab}$ can be written in initial value form on the Cauchy slice $\Sigma $ by introducing
coordinates such that $\Sigma$ is a constant time slice in the globally hyperbolic spacetime.
The normal vector to $\Sigma$ is  
\begin{equation*}n^a = \frac {1}{N}(t^a - N^a)\end{equation*}
 where $t^a$ is the
tangent vector to the time function. $N$ is usually termed the lapse and $N^a$ the shift. Then
the metric 
\begin{equation*} g_{ab} = -n_a n_b + h_{ab}\end{equation*}
where $h_{ab}$ is a geodesically complete
Riemannian metric on $\Sigma$. The extrinsic curvature of $\Sigma$  is given by
\begin{equation*}K_{ab}=\frac 12 {\mathcal L}_n{h_{ab}}\ .\end{equation*}
The 
3-metric is the restriction of the spacetime metric to $\Sigma$ and extrinsic curvature is the 
rate of change of the 3-metric in the spacetime. 
The 3-metric and extrinsic 
curvature satisfy the constraints:
\begin{align}&R-K_{ab}K^{ab}+K^2=16\pi \rho\cr
&D_b(K^{ab}-Kh^{ab})=8\pi J^a\label{constraints}\end{align}
where $R$ is the scalar curvature of $h$, and $\rho$ and $J^a$
are the energy and momentum densities of the matter sources. The remaining 
 Einstein equations give the time evolution of $h$ and $K$. 

Conversely, given any 3-manifold $\Sigma$ with geodesically complete Riemannian metric $h_{ab}$ and symmetric tensor 
$K_{ab}$ satisfying the constraints (\ref{constraints}) with physically reasonable matter sources (precisely, matter sources that satisfy the dominant energy condition $\rho\geq (J^aJ_a)^{\frac 12}$), one can the evolve the initial data $h_{ab}$ and $K_{ab}$ into a globally
hyperbolic spacetime  $\Sigma \times {\mathbb R}$. 

Next,  the notions of locally symmetric tensor fields
and spacetimes with local symmetries are made precise:

\begin{Def}\label{lss1} A tensor field, $T_{ab\cdots }{}^{cd\cdots }$,
on a manifold, $M$, is {\it locally symmetric with respect to} $G$ iff 
every point in $M$ has an open neighborhood $U$ such that the following 
conditions are satisfied:

(i) There is a finite set of vector fields $\{\xi _i{}^a\}$ on $U$ which 
generate a faithful representation of the Lie algebra of $G$. 

(ii) ${\mathcal L} _{\xi _i}T_{ab\cdots }{}^{cd\cdots } \bigg| _U = 0$ for these 
vectors. 
\end{Def}

If the tensor with the local symmetry is the metric on the manifold, then the 
vectors are local Killing vectors. The particular symmetry of interest in this paper
is local spherical symmetry. 
\begin{Def} \label{lss3} A tensor field, $T_{ab\cdots }{}^{cd\cdots }$,
on a manifold, $M$, is  locally spherically symmetric iff it is locally 
symmetric with respect to $SO(3)$ and the orbits of the vector fields are 
two dimensional. 
A manifold, $M$, with metric, $g_{ab}$,
is locally spherically symmetric if the metric is locally
spherically symmetric.\end{Def}
\begin{Def} \label{lss4} A manifold, $M$, with metric, $g_{ab}$,
is locally isotropic if it is locally spherically symmetric about every point. \end{Def}

A natural extension of these definitions is to initial data. Initial data is locally
spherically symmetric if both $h_{ab}$ and $K_{ab}$ are locally
spherically symmetric tensors on the Cauchy slice $\Sigma $.
 It follows from definitions \ref{lss1}-\ref{lss3} that any globally spherically symmetric
space must also be locally spherically symmetric. However, 
converse is not true in general.

For example, the Robertson-Walker cosmologies are globally isotropic only if no global
identifications are allowed. If global identifications are allowed, the number of
global symmetries is reduced and the spacetime fails to be globally isotropic. A
simple example of this is periodic identifications on the topology ${\mathbb R}^3$ with flat metric;
the resulting space is a 3-torus, $T^3$. The periodic translations leave
the flat metric on ${\mathbb R}^3$ invariant. Therefore, $T^3$ is locally isotropic. 
However, it is not
globally isotropic. The reason is that the spatially flat metric on $T^3$ has lost
its global rotational symmetries. In particular, although there is a neighborhood about every point  in $T^3$ that is explicitly rotationally invariant, that neighborhood cannot be extended across the boundaries of the fundamental domain of the identifications as the rotational killing vector becomes multiply defined under the identifications.

One can take any two 3-manifolds $\Sigma_1$ and $\Sigma_2$ and topologically join them by removing a 3-ball from each, then identifying the resulting $S^2$ boundaries. The resulting
space $\Sigma_1 \# \Sigma_2$ is called 
the {\it connected sum}. Taking the connected sum of two manifolds is a
topological operation and does not necessarily preserve any geometric properties. However, the theorem below shows that the connected sum can indeed preserve local isotropy.
 The solutions of these initial data sets are locally de Sitter solutions; every observer can find a local coordinate system in which the spacetime is manifestly a de Sitter cosmology. 

\begin{thmmj} Let $(\Sigma _i, {h_{ab}} _i, {K_{ab}} _i)$  be any countable collection of  locally isotropic initial data 
sets for globally hyperbolic spacetimes of the form $M=\Sigma _i\times {\mathbb R}$
that satisfy $R_{ab}=\Lambda g_{ab}$ with $\Lambda >0$.  Then one can
construct a locally isotropic solution to these equations with Cauchy slice 
$\Sigma = \Sigma _1\# \Sigma _2\# \Sigma _3\#   \dots   \#\Sigma _k\# \dots \ \  .$
Furthermore, restricted to each factor $\Sigma_i$, the initial data for this solution is just $({h_{ab}} _i, {K_{ab}} _i)$.\end{thmmj}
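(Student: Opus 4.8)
The strategy is to assemble the Cauchy data on $\Sigma$ by hand, splicing into each connected-sum neck a piece of an $SO(3)$-invariant spacelike slice of genuine de Sitter space, and then to quote the evolution statement above. For $R_{ab}=\Lambda g_{ab}$ the effective source is $\rho=\Lambda/8\pi>0$, $J^a=0$, so the dominant energy condition holds and the constraints read $R-K_{ab}K^{ab}+K^2=2\Lambda$, $D_b(K^{ab}-Kh^{ab})=0$. Fix the $SO(3)$ in $SO(4,1)$ acting on three of the five ambient coordinates of $\mathrm{dS}_4\subset{\mathbb R}^{4,1}$; its principal orbits are round $2$-spheres and the quotient is, away from the axis, the constant-curvature Lorentzian surface $\mathrm{dS}_2$. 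An $SO(3)$-invariant spacelike slice of $\mathrm{dS}_4$ is then the same data as a spacelike curve $\gamma$ in $\mathrm{dS}_2$ (meeting the axis exactly where the slice has a regular center): with arclength parameter $\ell$ and $\rho=\rho_\gamma(\ell)$ the $\rho$-coordinate of the curve, the induced metric is $h=d\ell^2+\rho_\gamma(\ell)^2\,d\Omega_2^2$, the induced $K$ is likewise $SO(3)$-invariant, and these automatically solve the constraints by the Gauss--Codazzi relations for $R_{ab}=\Lambda g_{ab}$. Because $\mathrm{dS}_2$ has constant curvature, the Lorentzian analogue of the fundamental theorem of plane curves applies: a spacelike curve there is determined, up to an isometry of $\mathrm{dS}_2$, by its geodesic-curvature function, and any prescribed germ of such a function is realized by a spacelike arc through any point in any unit spacelike direction. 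Hence the germ of $(h,K)$ at an interior point of a slice is encoded, up to isometry, by the germ of the geodesic-curvature function of the corresponding curve; and, conversely, a standard argument (develop to a locally de Sitter spacetime, propagate the local $SO(3)$ via Killing initial data, and use that all $SO(3)$'s in $SO(4,1)$ with two-dimensional orbits are conjugate) shows that any locally isotropic $(h_i,K_i)$ is, near every point, of exactly this form.

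With this in hand, prepare the factors: in each $\Sigma_i$ pick a point $p_i$ that is a regular center of the local rotational symmetry and remove the small spherically symmetric ball $B_i$ bounded by an orbit sphere $\{\ell=\epsilon_i\}$ of the curve $\gamma_i$ modeling the data near $p_i$. Then $\Sigma_i\setminus B_i$ carries the restriction of $(h_i,K_i)$, and in a collar of its $S^2$ boundary this data is the tube over a short arc of $\gamma_i$, with a smooth curvature-function germ $k_i$ at the cut. The connected sum $\Sigma=\Sigma_1\#\Sigma_2\#\cdots$ is built by joining $\Sigma_1\setminus B_1,\Sigma_2\setminus B_2,\dots$ in a chain, inserting between consecutive cut spheres a neck $N\cong S^2\times[0,L]$ carrying spherically symmetric de Sitter data whose two end-germs match those of the adjacent collars.

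Constructing the neck is the one step that is not purely formal. Take a short spacelike arc $\eta:[0,L]\to\mathrm{dS}_2$ staying near a single point well away from the axis, with geodesic-curvature function equal to $k_i$ near $\ell=0$ and to the (orientation-reversed) germ $k_j$ near $\ell=L$, interpolated smoothly in between---such an arc exists for any $L>0$, since $\mathrm{dS}_2$ has constant curvature and since a spacelike curve automatically stays spacelike along the Frenet evolution. Let $N$ be the tube over $\eta$ with $h_N=d\ell^2+\rho_\eta^2\,d\Omega_2^2$ and its induced $K$; the constraints hold on $N$ for free, and by the curvature-germ/data-germ correspondence the two ends of $N$ are isometric to the two collars they must be glued to. Glue $\Sigma_1\setminus B_1$, the necks, and the remaining $\Sigma_j\setminus B_j$ along these collars by the germ-isometries (orientation-reversing if needed). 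The result is a smooth $3$-manifold diffeomorphic to $\Sigma_1\#\Sigma_2\#\cdots$; $h$ and $K$ are $C^\infty$ across every gluing, because on a collar neighborhood of each gluing sphere the data on the two sides is literally the same spherically symmetric de Sitter data, and $h$ is geodesically complete since the $\Sigma_i\setminus B_i$ are complete and are joined along compact spheres. Since the necks and balls occupy pairwise disjoint regions, the countable case is no different from the two-factor case. By construction $(h,K)$ is locally spherically symmetric about every point of $\Sigma$---it inherits the hypothesis on each $\Sigma_i\setminus B_i$, is $SO(3)$-invariant on each neck, and the two local $SO(3)$ actions agree on a neighborhood of each gluing sphere---so the data is locally isotropic, and restricted to each factor it is exactly $(h_i,K_i)$.

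Finally, the converse statement above develops this geodesically complete, constraint-satisfying data into a globally hyperbolic spacetime $\Sigma\times{\mathbb R}$ with $R_{ab}=\Lambda g_{ab}$. It is locally isotropic: on a neighborhood of $\Sigma$ it is covered by the Cauchy developments of the spherically symmetric data patches, each of which is by uniqueness a region of $\mathrm{dS}_4$, and the local rotational Killing fields propagate to the future and past, so every point sits in a neighborhood that is manifestly a de Sitter cosmology. The one place where the argument has content rather than bookkeeping is the neck construction, and what makes it work is the flexibility of geodesic-curvature germs on the constant-curvature surface $\mathrm{dS}_2$ together with the automatic preservation of spacelikeness along the Frenet evolution.
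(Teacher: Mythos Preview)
Your route---encoding $SO(3)$-invariant de Sitter slices as spacelike curves in the two-dimensional quotient $\mathrm{dS}_4/SO(3)$, which you identify with (half of) $\mathrm{dS}_2$---is genuinely different from the paper's. The paper instead deforms the \emph{metric} on the neck by a spherically symmetric bump function, integrates the constraint ODEs to produce $K$, evolves, and then invokes the cosmic Birkhoff theorem to upgrade local spherical symmetry of the development to local isotropy. Your picture is more geometric and would, if it worked, bypass Birkhoff entirely since each neck is literally a slice of $\mathrm{dS}_4$.

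The gap is the ``curvature-germ/data-germ correspondence.'' You assert that the germ of $(h,K)$ is encoded, up to isometry, by the geodesic-curvature germ $k(\ell)$ of the curve; this is false. The 3-metric is $h=d\ell^2+\rho_\gamma(\ell)^2\,d\Omega_2^2$ with $\rho_\gamma(\ell)$ the $r$-coordinate of $\gamma(\ell)$, and $\rho_\gamma$ is \emph{not} determined by $k$. The fundamental theorem of curves fixes $\gamma$ only up to the full isometry group $O(2,1)$ of $\mathrm{dS}_2$, whereas the 3-data is invariant only under the centralizer of $SO(3)$ in $SO(4,1)$, which acts on the quotient as a one-parameter group preserving $r$. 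Concretely: all spacelike geodesics in $\mathrm{dS}_2$ have $k\equiv 0$, yet those cut by the planes $X_0=ar$ give $\rho(\ell)=\tfrac{\alpha}{\sqrt{1-a^2}}\cos(\ell/\alpha)$, and the corresponding 3-metrics $d\ell^2+\tfrac{\alpha^2}{1-a^2}\cos^2(\ell/\alpha)\,d\Omega_2^2$ are pairwise non-isometric (only $a=0$ is the round $S^3$; for $a\neq 0$ the scalar curvature is nonconstant). Similarly $K_\theta{}^\theta$ depends on the $r$-component of the unit normal to $\gamma$, not on $k$ alone.

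Consequently your neck construction does not close. Shooting from the $\Sigma_i$ collar with a prescribed $k(\ell)$ lands you at \emph{some} point and direction in $\mathrm{dS}_2$, with no reason for $(\rho,\rho',K_\theta{}^\theta,\ldots)$ there to match the $\Sigma_j$ collar; your ``for any $L>0$'' sentence solves an initial-value problem, not the two-point boundary-value problem actually needed. (There is also a tension with your collars being at small $\rho$ near a regular center while the neck is said to stay ``well away from the axis.'') The fix is either to solve that boundary-value problem---infinite-dimensional freedom in $k$ against finitely many jet conditions at the far end, so it can be done, but it needs an argument---or to revert to the paper's method: interpolate $\rho(\ell)$ by a bump function, integrate the constraints (the conserved Misner--Sharp mass vanishes on the collars and hence on the neck, so the result is still locally de Sitter), and then quote Birkhoff for the development.
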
 

\begin{proof} The proof, given fully in \cite{MorrowJones:1993zu} is outlined here.
Let $(\Sigma _1, {h_{ab}} _1, {K_{ab}} _1)$ and $(\Sigma _2, {h_{ab}} _2, {K_{ab}} _2)$ 
be any two 3-manifolds with locally isotropic initial data satisfying the Einstein equations 
$R_{ab}=\Lambda g_{ab}$ and $\Lambda >0$. Note that 
local isotropy implies that  one can construct local spherically symmetric
coordinates.
Next, define a metric on 
$\Sigma _1\# \Sigma _2$ by using local spherically symmetric coordinates in each
of the two manifolds in a neighborhood of the excised 3-balls, then smoothly deforming between the two metrics using
smooth, spherically symmetric bump functions. The resulting metric is manifestly locally spherically symmetric. Next, the extrinsic curvature ${K_{ab}}$ in this region can be constructed by integrating the constraints (\ref{constraints}). This construction also results in a locally spherically symmetric extrinsic curvature. Thus there exists locally spherically symmetric initial data on $\Sigma _1\# \Sigma _2$.  Next, the evolution of this initial data is shown to exist; furthermore, the resulting spacetime is proven to also be locally spherically symmetric. Next, using the cosmic Birkhoff theorem, it is shown that the spacetime on $\Sigma _1\# \Sigma _2$ is locally isotropic, that is it is locally de Sitter. Finally, iteration of the construction results in a globally hyperbolic spacetime with Cauchy slice of the form $\Sigma = \Sigma _1\# \Sigma _2\# \Sigma _3\#   \dots   \#\Sigma _k\# \dots \ \  $ that is locally de Sitter.
\end{proof}

Many different 3-manifolds are known to admit initial data sets for de Sitter spacetimes. For example, the well-known de Sitter solutions (\ref{sphere}-\ref{hyp})  have initial data sets on $\Sigma = {\mathbb R}^3$ (with both flat and hyperbolic spatial geometry) and $\Sigma= S^3$. Furthermore, there is also a Kasner slicing of de Sitter spacetime (see  \cite{MorrowJones:1993zu}) with topology $\Sigma = S^2 \times {\mathbb R}$ and metric
\begin{equation}\label{kasnerslice} ds^2 = -dt^2 + \alpha ^2 \left( \sinh ^2 \left( t/\alpha \right) d\psi ^2
	+ \cosh ^2 \left( t/\alpha \right) d\Omega _2 ^2
			\right)\ .\end{equation}	
	 In addition, identifications by a finite group $\Gamma$, for example $S^3/\Gamma$ or $\Sigma = {\mathbb R}^3/\Gamma $ for the case of hyperbolic and flat geometries, also admit locally isotropic initial data sets. Furthermore, arbitrary connected sums of these manifolds are generic 3-manifolds. Hence locally isotropic de Sitter initial data sets are generic in the set of all 3-manifolds.
	  
Given that generic 3-manifolds are Cauchy slices for locally de Sitter spacetimes,   it is natural to consider the physical
consequences of these solutions. 
Generically, these solutions are not the standard de Sitter
cosmologies or identifications thereof and have very different global dynamics. Specifically, there exists a class of locally de Sitter spacetimes, henceforth termed designer de Sitter spacetimes;

\begin{Def} \label{lss5} A designer de Sitter spacetime is any locally de Sitter spacetime with Cauchy slice of
the form $\Sigma = \Sigma _1\# \Sigma _2\# \Sigma _3\#   \dots   \#\Sigma _k\# \dots \ \  $
where  at least two $\Sigma _i$'s  are not $S^3$'s or at least one $\Sigma_i$ is $S^2\times {\mathbb R}$ or an identification on $S^2\times {\mathbb R}$.  \end{Def}
Observe that $S^3$ is the identity under the connected sum: $\Sigma\# S^3= \Sigma$ for any $\Sigma$. Therefore definition \ref{lss5} requires two of the factors to be other than an $S^3$ to ensure $\Sigma$ not  have a simple topology. The alternate requirement of at least one $S^2\times {\mathbb R}$ or an identification thereof will also suffice as seen below.

We begin by reviewing the construction of the {\it universal covering manifold} (also termed the {\it universal covering space}) ${\cal M}$ of a smooth
path-connected manifold $M$.  Pick a point $x_0\in
M$ and consider the set of smooth paths $P=\{c:[0,1]\rightarrow M|c(0)=x_0\}$.  A
projection map $\pi:P\rightarrow M$ is defined by  $\pi(c(t))=c(1)$.  Let ${\cal
M}$ be $P$ modulo the equivalence relation, $c_1\sim c_2$ if and only if
$c_1(1)=c_2(1)$ and $c_1$ is homotopic to $c_2$ with endpoints fixed. The
projection map $\pi$ is then well defined and smooth as a map $\pi:  {\cal
M}\rightarrow M$. By construction, the universal covering manifold ${\cal M}$ is simply connected.

We now prove that the global dynamics of a designer de Sitter spacetime cannot be described in terms of a single scale factor.
We begin by answering the question, what type of global topological structure can a
locally isotropic spacetime possess if one assumes that its global dynamics are obtained from a single scale factor $a(t)$?
\begin{thm} \label{firstthm} Let  $\Sigma \times {\mathbb R}$ be any globally hyperbolic spacetime
that satisfies $R_{ab}=\Lambda g_{ab}$ with $\Lambda >0$ whose global dynamics is given by
a single time varying scale factor $a(t)$. Then the $\Sigma $
has  either $S^3$ or ${\mathbb R}^3$ as its universal covering manifold. Furthermore, the spatial geometry 
is either spherical ($S^3$) or  flat or hyperbolic (${\mathbb R}^3$).
\end{thm}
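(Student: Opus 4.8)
The plan is to unpack the hypothesis ``global dynamics given by a single scale factor $a(t)$'' as the statement that, after a choice of time function and a reparametrization of $t$ that absorbs the lapse, the spacetime metric takes the Robertson--Walker form $ds^2 = -dt^2 + a^2(t)\,\gamma_{ij}\,dx^i dx^j$ for a single fixed Riemannian metric $\gamma$ on $\Sigma$ that does not depend on $t$. Since the induced metric on a constant-$t$ slice is $h_{ab}=a^2(t)\,\gamma_{ab}$ with $a(t)\neq 0$, geodesic completeness of $h$ (part of the initial-data hypotheses) is equivalent to geodesic completeness of $\gamma$, so $(\Sigma,\gamma)$ is a complete Riemannian $3$-manifold. (If $\Sigma$ is disconnected one treats each component separately, so assume $\Sigma$ connected.)

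The first step is to impose $R_{ab}=\Lambda g_{ab}$ on this ansatz. Computing the Ricci tensor of the Robertson--Walker metric, the mixed time--space components vanish automatically, the time--time component yields the usual second Friedmann equation for $a$, and the space--space components give a relation of the schematic form $R_{ij}(\gamma)=f(a,\dot a,\ddot a)\,\gamma_{ij}$. Because $\gamma$, and hence $R_{ij}(\gamma)$, is independent of $t$, the function $f$ must be a constant; call it $2k$. Thus $\gamma$ is an Einstein metric, $R_{ij}(\gamma)=2k\,\gamma_{ij}$, and simultaneously $a(t)$ satisfies the Friedmann equation with spatial curvature constant $k$ and cosmological constant $\Lambda$.

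The second step is purely Riemannian. In dimension three the Weyl tensor vanishes, so the Riemann tensor is algebraically determined by the Ricci tensor; an Einstein metric in $3$ dimensions therefore has constant sectional curvature, equal here to $k$. Now invoke the Killing--Hopf theorem: a complete, simply connected Riemannian $3$-manifold of constant sectional curvature $k$ is isometric to the round sphere $S^3$ if $k>0$, to Euclidean $\mathbb{R}^3$ if $k=0$, and to hyperbolic space $\mathbb{H}^3$ if $k<0$. The universal covering manifold of $\Sigma$, equipped with the pullback of $\gamma$, is complete, simply connected, and of constant curvature $k$, hence one of these three model spaces. Since $\mathbb{H}^3$ is diffeomorphic to $\mathbb{R}^3$, the universal covering manifold is $S^3$ when $k>0$ and $\mathbb{R}^3$ when $k\leq 0$, with spatial geometry spherical, flat, or hyperbolic respectively, which is the claim.

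I expect the only genuinely delicate point to be the first step: making precise, and justifying, the reduction to the Robertson--Walker ansatz from the informal phrase ``global dynamics given by a single scale factor,'' and checking that the space--space Einstein equations really do force $\gamma$ to be Einstein rather than merely constraining it pointwise in $t$. Once that is in hand the argument is a concatenation of standard facts---the $3$-dimensional implication Einstein $\Rightarrow$ constant curvature, and the Killing--Hopf classification---with completeness of $\gamma$, inherited from the geodesic completeness built into the initial-data framework, being exactly the hypothesis needed to apply the latter.
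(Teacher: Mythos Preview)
Your proof is correct and follows essentially the same route as the paper's: interpret the single-scale-factor hypothesis as $h_{ab}(t)=a(t)\,h_{ab}$ (equivalently your Robertson--Walker ansatz), use the Einstein equations to conclude that the spatial Ricci tensor is proportional to $h_{ab}$, invoke the vanishing of the Weyl tensor in three dimensions to upgrade this to constant sectional curvature, and finish with the space-form classification (what you call Killing--Hopf, what the paper cites as Wolf). The only cosmetic differences are that the paper writes the evolution equations in ADM/initial-data form rather than computing the four-dimensional Ricci directly, and does not pause to observe, as you do, that your $f$ must in fact be a constant rather than merely a function of $t$.
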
 
\begin{proof} If the dynamics of the spacetime is given 
by an overall scale factor, then in a neighborhood of the Cauchy slice $\Sigma$ the spacetime metric, written in synchronous gauge, yields  initial data with spatial metric of form
$h_{ab}(t)=a(t)h_{ab}$. Additionally, as $K_{ab}(t)=\frac 12 {\dot h}_{ab}(t)$ in this gauge, the extrinsic curvature also takes the form $K_{ab}(t)=b(t)h_{ab}$. 
Now, the 3-manifold $\Sigma $ with metric $h_{ab}(t)$ is complete. In addition, as there
is no Weyl curvature in 3 dimensions, the Riemann curvature is determined by the 
Ricci curvature $R_{ab}(t)$. The Einstein equations yield the 
following expression for the Ricci curvature of $h_{ab}(t)$ \cite{Wald};
\begin{equation}R^a_b - \frac {1}{2}R h^a_b + {\mathcal L}_{n} (K^a_b-h^a_bK) -KK^a_b+ 
 \frac {1}{2} h^a_bK^2-\frac {1}{2}h^a_bK^{cd}K_{cd}= -\Lambda \, h^a_b \  .\end{equation}
Substitution of the forms for $h_{ab}(t)$ and $K_{ab}(t)$ into the above expression yields
\begin{equation*}R_{ab}(h)=C(t)h_{ab}\ ,\end{equation*}
where $C(t)$ is only a function of $t$. Hence the Ricci scalar determines the Ricci curvature. It follows that the Riemann curvature is of the form 
\begin{equation*}R_{abcd}= D(t)(g_{ab}g_{cd}-g_{ac}g_{bd})\end{equation*}
where $D(t)$ is only a function of $t$. Thus, $\Sigma $ with family of metrics is $h_{ab}(t)$ is
a space form for all time. Finally, note that classic results from differential geometry \cite{wolf} imply
that if $\Sigma $ is a space form, it  has a universal covering space $S^3$ or ${\mathbb R}^3$. Moreover, the geometry of
the covering space is  either spherical, flat or hyperbolic.
\end{proof}
Theorem \ref{firstthm} implies that the spatial topology of any locally de Sitter spacetime with evolution described by a single scale factor is constructed by identifications on either the $S^3$ Cauchy slice or one of the two ${\mathbb R}^3$ partial Cauchy slices of the global de Sitter spacetime solution itself. Note in particular, the de Sitter spacetime  with Cauchy slice $S^2 \times {\mathbb R}$ (\ref{kasnerslice}) has an evolution described by two scale factors. Hence this topology is distinct.

We now use this result to prove the first property of designer de Sitter spacetimes:

\begin{thm} The global dynamics of designer de Sitter spacetimes is not
determined by a single scale factor. 
\end{thm}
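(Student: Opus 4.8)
The plan is a proof by contradiction built on the contrapositive of Theorem~\ref{firstthm}. Suppose some designer de Sitter spacetime $\Sigma\times{\mathbb R}$ had global dynamics governed by a single time varying scale factor $a(t)$. Since such a spacetime is globally hyperbolic and satisfies $R_{ab}=\Lambda g_{ab}$ with $\Lambda>0$, Theorem~\ref{firstthm} would force its Cauchy slice $\Sigma$ to have either $S^3$ or ${\mathbb R}^3$ as its universal covering manifold. So the task reduces to the purely topological assertion that the Cauchy slice of a designer de Sitter spacetime admits neither $S^3$ nor ${\mathbb R}^3$ as a universal cover. I would obtain this by showing $\pi_2(\Sigma)\neq 0$: the universal cover $\widetilde{\Sigma}$ is simply connected and the covering projection is an isomorphism on $\pi_n$ for $n\ge 2$, so the Hurewicz theorem gives $H_2(\widetilde{\Sigma})\cong\pi_2(\widetilde{\Sigma})=\pi_2(\Sigma)\neq 0$, whereas $H_2(S^3)=H_2({\mathbb R}^3)=0$ --- a contradiction.

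To establish $\pi_2(\Sigma)\neq 0$ for a designer de Sitter Cauchy slice $\Sigma=\Sigma_1\#\Sigma_2\#\cdots$, I would split into the two alternatives of Definition~\ref{lss5}. If at least two summands, say $\Sigma_1$ and $\Sigma_2$, are not $S^3$, I claim $\Sigma$ is \emph{reducible}: the neck $2$-sphere $S$ joining $\Sigma_1$ to $\Sigma_2\#\Sigma_3\#\cdots$ bounds no $3$-ball in $\Sigma$, for otherwise one of its two complementary sides would be a $3$-ball, forcing $\Sigma_1\cong S^3$ or $\Sigma_2\#\Sigma_3\#\cdots\cong S^3$; the first is excluded by hypothesis, and by uniqueness of prime decomposition the second forces every $\Sigma_i$ with $i\ge 2$ to be $S^3$, contradicting $\Sigma_2\not\cong S^3$ (and if any $\Sigma_i$ is noncompact the relevant side is noncompact, so cannot be a ball at all). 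A reducible $3$-manifold has $\pi_2\neq 0$ by the Sphere Theorem (using the Poincar\'e conjecture to exclude fake $3$-cells). In the remaining alternative, some summand $\Sigma_i$ is $S^2\times{\mathbb R}$ or an identification thereof; every such manifold has universal cover $S^2\times{\mathbb R}$, so $\pi_2(\Sigma_i)\cong\pi_2(S^2)\neq 0$, and since $\pi_2$ of a connected sum contains the $\pi_2$ of each summand (as an induced ${\mathbb Z}[\pi_1]$-submodule) one again gets $\pi_2(\Sigma)\neq 0$. In either case the first paragraph delivers the contradiction, so no single scale factor can describe the dynamics.

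The substantive step --- and the main obstacle to a fully elementary write-up --- is the chain \emph{(nontrivial connected sum or $S^2\times{\mathbb R}$ summand) $\Rightarrow\ \pi_2(\Sigma)\neq 0\ \Rightarrow\ H_2(\widetilde{\Sigma})\neq 0$}: it leans on the Sphere Theorem and Kneser's prime decomposition, wants the Poincar\'e conjecture for its cleanest phrasing, and needs mild extra care when the summands are noncompact or nonorientable (in the latter case one invokes the projective-plane version of the Sphere Theorem). Worth recording too is the complementary fact that makes the dichotomy of Theorem~\ref{firstthm} rigid --- that $S^3$ and ${\mathbb R}^3$ are irreducible, every smoothly embedded $2$-sphere in them bounding a ball by the theorems of Alexander and Schoenflies --- which is precisely why $H_2$ of the universal cover is the right invariant to obstruct a single-scale-factor evolution.
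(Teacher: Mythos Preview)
Your argument is correct and follows the same overall strategy as the paper's: assume a single scale factor, invoke Theorem~\ref{firstthm} to force the universal cover of $\Sigma$ to be $S^3$ or ${\mathbb R}^3$, then derive a contradiction from the reducibility of the designer Cauchy slice. The paper phrases the topological step more directly: $S^3$ and ${\mathbb R}^3$ are irreducible (every embedded $2$-sphere bounds a $3$-ball), and so is any quotient of them by a group action, whereas a designer de Sitter $\Sigma$ manifestly contains a $2$-sphere bounding no ball---the neck sphere of the connected sum when two factors are not $S^3$, or the $S^2$ fiber when an $S^2\times{\mathbb R}$ factor is present. Your route through $\pi_2(\Sigma)\neq 0$, the Sphere Theorem, Hurewicz, and $H_2$ of the universal cover is a valid and more explicit unpacking of essentially the same obstruction---indeed, your chain of implications is one standard way to \emph{prove} that quotients of $S^3$ and ${\mathbb R}^3$ are irreducible---but it invokes heavier machinery (and, as you note, the Poincar\'e conjecture and prime decomposition) where the paper is content to cite the irreducibility fact from Hempel~\cite{Hemple}.
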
 

\begin{proof} Pick a designer de Sitter spacetime. 
If $\Sigma $ has its dynamics determined by a single scale factor, then according to theorem \ref{firstthm},
the universal covering manifold of $\Sigma $ is either $S^3$ or ${\mathbb R}^3$. Both $S^3$ and
${\mathbb R}^3$ are irreducible manifolds, namely every 2-sphere contained in them
is the boundary of a 3-ball \cite{Hemple}. In addition,  any space constructed by 
identifying points via a group action on either $S^3$ or ${\mathbb R}^3$ is also irreducible.
Hence, $\Sigma $ is irreducible if its dynamics is determined by a single scale factor. Next, if there are at least two factors which are not 3-spheres, there is a 
2-sphere in the manifold $\Sigma $ which is not the boundary of any 3-ball: the boundary $S^2$ identified in the connected sum. 
If there is at least one factor  of $S^2 \times {\mathbb R}$, the 2-sphere is obvious.
Having a 2-sphere that is not the boundary of any 3-ball  implies
 $\Sigma $ for designer de Sitter spacetimes is not irreducible. Hence, $\Sigma $ can not have universal covering space
$S^3$ or ${\mathbb R}^3$.  Therefore, the global dynamics is not given by a
single scale factor.
\end{proof}

The Kasner slicing of de Sitter (\ref{kasnerslice}) hints at how this anisotropy will be manifested; the $S^2$ factor scales differently than the ${\mathbb R}$ factor at small $t$.  Another simple example is formed by taking connected sum of two flat  ${\mathbb R}^3$ de Sitter initial data sets to form a designer de Sitter spacetime. This spacetime is simply connected. Furthermore, there is a local neighborhood of each point that evolves as a de Sitter spacetime. However, simple as its form is, its global dynamics will be more complicated. Intuitively, there is a length scale describing the size of a neighborhood beyond which isotropy is broken. Consequently an observer will make local measurements compatible with an isotropic spacetime on scales smaller than this length scale. On scales larger than this length scale, the observer will see anisotropy. More complicated examples can be constructed by taking connected sums of  flat and hyperbolic ${\mathbb R^3}/\Gamma$ initial data sets, spherical $S^3/\Gamma$ initial data sets and $S^2\times{\mathbb R}/\Gamma$ initial data sets.
In addition, there is no restriction on the number of factors in the connected sum.
Thus, there is basically an infinite number of free parameters in the set of designer de Sitter spacetimes:
those set by the global topology $\Sigma$ and others characterizing geometrical factors such as relative positions of the gluing regions in the geometrization of the connected sum.  
Thus a standard assumption in cosmology, that a single scale factor $a(t)$ describes the global evolution  of the universe is not true for the case of cosmological constant.  Therefore, designer de 
Sitter universes violate the Cosmological Principle in forms that assume global isotropy. 

\section{Conclusions}

There is no fundamental physical mechanism in cosmology that  rules out non-simply connected
topology such as $T^3$ whose dynamics are described by a single scale factor. Indeed such models have been studied previously \cite{Cornish:1997rp,
Cornish:1997hz,Weeks:1998qr}. However, there is clearly no reason to exclude other spatial topologies, in particular those of designer de Sitter universes. Hence, an interesting question is, what are potential observational signatures of designer de Sitter universes?

Local isotropy implies that physical processes that occur on relatively small scales will remain consistent with observations. For example, nucleosynthesis places strong constraints on the small scale anisotropy of the universe (see, for example, \cite{Rothman:1984vk}
 and references therein). However, these constraints cannot place strong bounds on designer de Sitter spacetimes as they are locally isotropic.

Observational signatures of finite size universes formed by identifications have been proposed. If the size of the universe is smaller than the surface of last scattering, these identifications lead to matched circles in the cosmic microwave background \cite{Cornish:1997rp,
Cornish:1997hz,Weeks:1998qr}. Simulations of this effect lead to limits on the identification scale \cite{Gausmann:2001aa,Riazuelo:2002ct} which are testable with the power spectrum results of WMAP. However, one expects that the observational signature of a universe with designer de Sitter topology will be quite different. These universes need not have any regularity as they are not formed by a group action on ${\mathbb R}^3$ or $S^3$.  Therefore they will not produce matched circles in the cosmic microwave background. Rather, their lack of global isotropy  would be expected to produce less regular deviations only at large length scales. As these universes exhibit an infinite number of parameters that characterize their topology and geometry, one can conjecture that by use of this freedom, one can match essentially any large scale feature in the cosmic microwave background by an appropriate choice of designer de Sitter universe. 

Finally, one could ask if this issue could be avoided by some physical mechanism. The dynamics of Einstein gravity do not allow topology change. The only  theory that can possibly  allow a prediction of the topology for the universe through dynamical effects is quantum gravity.
 However, it remains to be shown, once a consistent theory of quantum gravity is formulated in a way that this question can be answered, that in fact simple topology must occur. Thus for now, a better understanding of the observational consequences of designer de Sitter universes is most desirable.

\section*{Acknowledgements}
The work was supported by NSERC. In addition, the authors would like to
thank the Perimeter Institute for its hospitality during the writing of this paper.  Finally, DMW would like to thank the
organizers of Theory Canada III for their invitation to give this talk.


\begin{thebibliography}{99}

\bibitem{Perlmutter:1998np}
  S.~Perlmutter {\it et al.}  [Supernova Cosmology Project Collaboration],
  Astrophys.\ J.\  {\bf 517}, 565 (1999)
  [arXiv:astro-ph/9812133].



\bibitem{Spergel:2003cb}
  D.~N.~Spergel {\it et al.}  [WMAP Collaboration],
  Astrophys.\ J.\ Suppl.\  {\bf 148}, 175 (2003)
  [arXiv:astro-ph/0302209].
  
\bibitem{Tegmark:2003ud}
  M.~Tegmark {\it et al.}  [SDSS Collaboration],
  Phys.\ Rev.\  D {\bf 69}, 103501 (2004)
  [arXiv:astro-ph/0310723].
  
  
\bibitem{Spergel:2006hy}
  D.~N.~Spergel {\it et al.}  [WMAP Collaboration],
  Astrophys.\ J.\ Suppl.\  {\bf 170}, 377 (2007)
  [arXiv:astro-ph/0603449].
  
  \bibitem{Walker1937} A.~G.~ Walker,
  Proc. of the London Math. Soc.: s2\ {\bf 42}, 90 (1937)
  
\bibitem{MorrowJones:1993zu}
  J.~Morrow-Jones and D.~M.~Witt,
  Phys.\ Rev.\  D {\bf 48}, 2516 (1993).

\bibitem{MorrowJones:1988yw}
  J.~W.~Morrow-Jones,
  ``Nonlinear Theories of Gravity: Solutions, Symmetry and Stability",
  citation = UMI-89-05295;

\bibitem{mess}
G.~Mess,`` Lorentz spacetimes of constant curvature", MSRI Preprint 90-05808, 1990.

\bibitem{Bengtsson:1999ia}
  I.~Bengtsson and S.~Holst,
  Class.\ Quant.\ Grav.\  {\bf 16}, 3735 (1999)
  [arXiv:gr-qc/9906040].

\bibitem{student}
K. P. Scannell, Flat conformal structures and causality in de Sitter manifolds, Ph.D. thesis,
University of California, Los Angeles, 1996.
K. P. Scannell, Flat conformal structures and the classification of de Sitter manifolds, 
Comm. Anal. Geom. 7 (1999), no. 2, 325Ð345.

\bibitem{Wald} Robert M. Wald, {\sl General Relativity}, (Univ. of Chicago,
Chicago, 1984).


 \bibitem{Hemple} John Hempel, {\sl 3-Manifolds}, (Princeton University Press, Princeton 1976).
  
  \bibitem{wolf} Joseph A. Wolf, {\sl Spaces of Constant Curvature}, 5th ed., (Publish or Perish Inc., Wilmington, 1984).




\bibitem{Cornish:1997rp}
  N.~J.~Cornish, D.~N.~Spergel and G.~D.~Starkman,
  Proc.\ Nat.\ Acad.\ Sci.\  {\bf 95}, 82 (1998)
  [arXiv:astro-ph/9708083].
  

\bibitem{Cornish:1997hz}
  N.~J.~Cornish, D.~Spergel and G.~Starkman,
  Phys.\ Rev.\  D {\bf 57}, 5982 (1998)
  [arXiv:astro-ph/9708225].
  
\bibitem{Weeks:1998qr}
  J.~R.~Weeks,
  Class.\ Quant.\ Grav.\  {\bf 15}, 2599 (1998)
  [arXiv:astro-ph/9802012].
  
  

\bibitem{Rothman:1984vk}
  T.~Rothman and R.~Matzner,
  Phys.\ Rev.\  D {\bf 30}, 1649 (1984).


  
\bibitem{Riazuelo:2002ct}
  A.~Riazuelo, J.~P.~Uzan, R.~Lehoucq and J.~Weeks,
  Phys.\ Rev.\  D {\bf 69}, 103514 (2004)
  [arXiv:astro-ph/0212223].
  
\bibitem{Gausmann:2001aa}
  E.~Gausmann, R.~Lehoucq, J.~P.~Luminet, J.~P.~Uzan and J.~Weeks,
  Class.\ Quant.\ Grav.\  {\bf 18}, 5155 (2001)
  [arXiv:gr-qc/0106033].


  


\end{thebibliography}
\end{document}